\newtheorem{definition}{Definition}
\newtheorem{lemma}{Lemma}
\newtheorem{theorem}{Theorem}
\newtheorem{corollary}{Corollary}
\title{Privacy Leakage over Dependent Attributes in One-Sided Differential Privacy}
\author{\thanks{This material is based on research sponsored by DARPA under agreement number FA8750-16-2-0021. The U.S. Government is authorized to reproduce and distribute reprints for Governmental purposes notwithstanding any copyright notation thereon. The views and conclusions contained herein are those of the authors and should not be interpreted as necessarily representing the official policies or endorsements, either expressed or implied, of DARPA or the U.S. Government.}}
\author{Phillip Lee, Kevin Smith
\thanks{P. Lee is with Honeywell.
{\tt{phillip.lee}@honeywell.com}}
\thanks{K. Smith is with Tridium.
{\tt{ksmith}@tridium.com}}}
\begin{document}
\maketitle

\begin{abstract}
Providing a provable privacy guarantees while maintaining the utility of data is a challenging task in many real-world applications. Recently, a new framework called One-Sided Differential Privacy (OSDP) was introduced that extends existing differential privacy approaches. OSDP increases the utility of the data by taking advantage of the fact that not all records are sensitive. However, the previous work assumed that all records are statistically independent from each other. Motivated by occupancy data in building management systems, this paper extends the existing one-sided differential privacy framework. In this paper, we quantify the overall privacy leakage when the adversary is given dependency information between the records. In addition, we show how an optimization problem can be constructed that efficiently trades off between the utility and privacy. 
\end{abstract}

\section{Introduction}
\label{sec:intro}
Integration of communication and sensing capabilities into an ever-increasing array of physical systems enables intelligent automation services in many applications including energy, building management and transportation \cite{gubbi2013internet}. However, the ubiquitous nature of connected sensors also increases the risk of privacy breaches of individuals \cite{kozlov2012security}. Maximizing the utility of the sensed data while providing privacy guarantees to individuals has been an active area of research in both academia and industry \cite{mehrotra2016tippers,agarwal2010occupancy}.  

In particular, privacy-driven access control \cite{ni2010privacy} and differential privacy \cite{dwork2011differential} have been two promising areas of research. In general, the two approaches are complimentary to each other. Access control mechanisms regulate who can access potentially sensitive information according to pre-defined privacy rules. On the other hand, differential privacy is designed for a query-response model where randomized response algorithms are implemented in such a way that responses are statistically indistinguishable independent of the individual's presence in the data set.  

Differential privacy provides provable privacy guarantees based on the tunable parameter $\epsilon \geq 0$, where smaller $\epsilon$ implies more privacy-sensitive response. However, maintaining the utility of the released data even for a relatively large value of $\epsilon$ is a challenging task \cite{ghayyur2018iot}. The steep decline of utility even for a small increase of $\epsilon$ is largely due to the over-pessimistic assumption of differential privacy. In differential privacy, the randomized response is designed such that inclusion or exclusion of \emph{any} individual in the database should preserve statistical similarity of the response. In addition, every record in the database is treated as equally sensitive, resulting in decreased utility even when $\epsilon$ is relatively large. 

Recently, a new framework called one-sided differential privacy (OSDP) \cite{doudalis2017one} has been proposed to overcome this challenge. Unlike the existing differential privacy, OSDP assumes that not all records are equally sensitive and increases the utility of the released data by exploiting the release of non-sensitive records. 

In this paper, we study the privacy guarantees of OSDP in the context of building management system. Specifically, we examine privacy implications of occupancy data, use cases of occupancy data in the building management system, and the effectiveness of OSDP in this context. We generalize the privacy guarantees provided by OSDP to the case when records are statistically dependent, which would enable more informed parameter tuning of OSDP by taking into account the dependencies of records. We make the following specific contributions:
\begin{itemize}
\item We quantify the privacy leakage under OSDP when the records are statistically dependent. We generalize the previous privacy guarantee in \cite{doudalis2017one} against the exclusion attack when the adversary is assumed to have dependency information in the form of conditional probabilities between the records.
\item We present composition rules for the case of multiple queries as well as the case when two applications that utilize different attributes  are actively exchanging obtained query responses. In both cases, we show that multiplicative composition rules can be used to compute the privacy leakage. 
\item Using the results obtained from the previous contributions, we show that the total information leakage from the query responses under OSDP can be expressed explicitly as a function of the privacy parameters. Using information-theoretic metrics, we set up an optimization problem that efficiently trades off between utility and privacy. 
\end{itemize}

The paper is organized as follows. We review the related work in Section \ref{sec:related}. Section \ref{sec:model} provides a summary of one-sided differential privacy and the exclusion attack. The problem statement and motivation for this research are given in Section \ref{sec:problem}. Our main results are presented in Section \ref{sec:main}. Section \ref{sec:conclusion} concludes the paper. 
\section{Related Work}
\label{sec:related}
Analysis of potential privacy leakage and mitigation strategies have been active areas of research \cite{ukil2014iot, medaglia2010overview}. Access control mechanisms for private data have been studied in \cite{ni2010privacy, byun2005purpose}. In \cite{ni2010privacy}, the existing role-based access control model is extended to incorporate complex privacy policies including purposes and obligations while detecting potential conflicts in privacy policies. 
Similarly in \cite{byun2005purpose}, a privacy-preserving access control mechanisms for a relational database has been proposed by associating purpose information for each data element. 

Differential privacy \cite{dwork2011differential} has gained attraction from the research community as a promising framework that provides provable privacy guarantees. In differential privacy, the query response is randomized in such a way that any query response would be statistically similar in the presence or absence of presence of any individual in the database. In addition to the rigorous theoretical privacy guarantees, differential privacy allows the user to tune the privacy parameter $\epsilon$ to trade-off between utility and privacy.

Differential privacy has been applied to building applications in \cite{chen2017pegasus, mehrotra2016tippers, ghayyur2018iot} in the context of streaming occupancy data. In \cite{ghayyur2018iot}, utility of the occupancy data has been studied when varying values of $\epsilon$ have been applied to the data. While \cite{ghayyur2018iot} finds that information at the aggregated level are mainly preserved under differential privacy, preserving the utility of data at the individual level remains a challenge.

Recently, OSDP \cite{doudalis2017one} has been proposed to overcome this challenge. However, the privacy guarantee when the records in the database are statistically dependent remains an open problem as noted in \cite{doudalis2017one}.
\section{Model and Preliminaries}
\label{sec:model}
\subsection{One-Sided Differential Privacy (OSDP) and the Exclusion Attack}
One main drawback of differential privacy \cite{dwork2011differential} is that it treats every record to be sensitive. Such over-pessimistic assumption will lead to severe degradation of the utility of the released data. Often, not all records in the database would be sensitive. For simplicity, it is assumed that there exists a policy function ${P}$ which classifies each record as either sensitive ($P(r) = 0$) or non-sensitive ($P(r) = 1$). 

One approach is to not release a sensitive record as a query answer by returning nothing or rejecting the query \cite{rizvi2004extending} and by releasing only the non-sensitive records as responses to queries. 
However, such an approach will immediately result in the adversary inferring that the non-released record is sensitive, which results in privacy leakage for the owner of the non-released record. 

Such a privacy breach resulting from not releasing certain records is referred to as the \emph{exclusion attack} \cite{doudalis2017one}. One quantification of measuring the robustness of a query answering mechanism $\mathcal{M}$ against the exclusion attack is defined in \cite{doudalis2017one}. Range of a mechanism $\mathcal{M}$ is the set of all possible outcomes of query responses. 

\begin{definition} 
\label{def:exclusion}
($\epsilon$-Freedom from exclusion attacks) A mechanism $\mathcal{M}$ satisfies $\epsilon$-freedom from exclusion attacks for policy $P$ and and parameter $\epsilon$ if:
\begin{eqnarray*}
\label{eq:exclusion}
\forall x: P(x) = 0 \mbox{ and } \mathcal{O} \subseteq \mbox{range}(\mathcal{M})\\
\frac{\mathbb{P}(r = x| \mathcal{M}(D) \in \mathcal{O})}{\mathbb{P}(r = y| \mathcal{M}(D) \in \mathcal{O})} \leq e^{\epsilon} \frac{\mathbb{P}(r = x)}{\mathbb{P}(r = y)}
\end{eqnarray*}
where $r$ is the target record in the database $D$, $x$ is the value that makes the record sensitive, $y$ is another value in the domain. 
\end{definition}

Definition \ref{def:exclusion} states that even after observing the query answer, the ratio of the posterior probabilities of whether the target record was sensitive or some other value should remain \emph{similar} to that of the ratio of prior probabilities given to the adversary.

In this paper, we focus on one specific mechanism of achieving OSDP called One-Sided Differential Privacy Randomized Response (OSDPRR). 
OSDPRR is an algorithm to release true data \cite{doudalis2017one} while preserving privacy. The following definition describes the OSDPRR.
\begin{definition} 
\label{def:osdprr}
 (ODSPRR) For a record $r$, release the record with probability $1-e^{-\epsilon}$ where $\epsilon > 0$ if it is non-sensitive. If the record is sensitive, then do not release the record. 
\end{definition}

It is easy to see why OSDPRR would provide mitigation against the exclusion attack. The fact that a record has not been released does not automatically implicate that the record was sensitive since there is a non-zero probability of $e^{-\epsilon}$ that a non-sensitive record would not be released as well. 

In \cite{doudalis2017one}, it was proved that OSDPRR with parameter $\epsilon$ satisfies $\epsilon$-freedom from exclusion attacks under the assumption that \emph{records in the database are statistically independent from each other}. In this paper, we are interested in the case when this assumption does not hold, as often is the case in many applications.

\section{Problem Statement and Motivation}
\label{sec:problem}
OSDP is an attractive solution for many practical applications due to its ease of implementation and provable privacy guarantees it provides. One application we studied is a building management system as part of the TIPPERS project \cite{mehrotra2016tippers}.

Figure \ref{fig:occupancy} shows an example of an occupancy pattern for a space in an office building generated by the Building Analytics App \cite{ghayyur2018iot}. Occupancy level 1 indicates that the particular space was occupied and 0 indicates otherwise. A quick inspection of the plot reveals that one can infer many attributes regarding the occupant of the space. For example, one can infer the starting time of the employee by examining when the occupancy level turns to 1 at the beginning of the day as well as when the employee leaves the office. In addition, one can also infer how many times the employee leaves the space and how long it takes for the employee to come back to the space on average. 

\begin{figure}[h]
    \centering
    \includegraphics[width=3.5in]{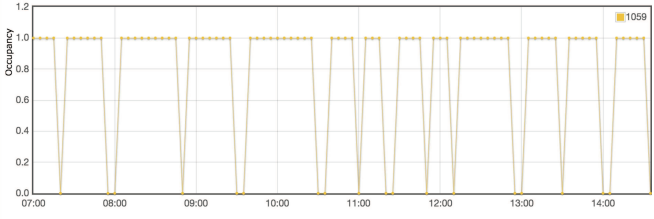}
    \caption{Occupancy pattern of an office space generated using Building Analytics App \cite{ghayyur2018iot}. Occupancy level 1 indicates that the space is occupied and level 0 indicates the space is not occupied.}
    \label{fig:occupancy}
\end{figure}

Such attributes can be useful in many building applications including energy as well as space optimization.
For example, a commercial building typically has a default work hours setting (ex: 8:00 am to 5:00 pm) when all lights are turned on. However, a lighting control coupled with occupancy sensors would enable more energy efficient lighting control \cite{thomas2009building}. This could be done by looking at average start time per space, and if the space does not become occupied after the default starting hour on average (ex: 9:00 am), then the lighting control can customize the lighting schedule to either match the average starting time of the space or only turn on the light based on the detected occupancy data.

Such lighting control will require average starting times of spaces occupancy to customize its schedule. However, the lighting control would not require the average occupancy level of spaces during the default work hours.  On the other hand, the average occupancy level (fraction of times when the space is occupied during the work hour) information would be required for other types of applications including space optimization \cite{mahasenan2018building}.

While the occupancy data and the attributes that pertain to the data can enable more efficient operation of buildings, many of these attributes are also potentially sensitive information for the occupants. One potential solution to provide privacy to the occupants while enabling these applications is OSDPRR. Given that different applications require different sets of attributes to perform their respective functionalities, OSDPRR can be implemented per attribute. However, as often is the case, there is no guarantee that the records from different attributes are statistically independent from each other. Moreover, it is also possible that two or more applications are colluding and actively exchanging information obtained from queries. 

\begin{figure}[h]
    \centering
    \includegraphics[width=3.6in]{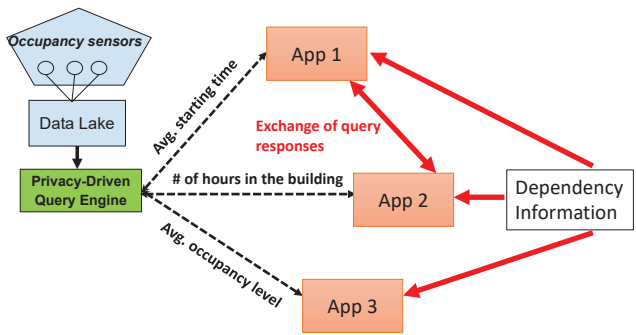}
    \caption{Illustration of an use case of occupancy data for building management. The dashed lines indicate queries and responses from multiple applications, and the red solid lines indicate additional information including the dependency information between the attributes as well as the exchanged queries responses between colluding applications. }
    \label{fig:framework}
\end{figure}

This implies that when releasing the data under OSDP, the query response mechanisms needs to take into account additional privacy leakage through statistical dependency as well as potential colluding scenarios. Figure \ref{fig:framework} illustrates an example of occupancy driven building applications where each application will be given the query response as well as additional information from statistical dependency of records and other query responses from exchanging information with other applications. 

In the following Section, we will show how to quantify privacy leakage from such additional information under OSDPRR.

\section{Privacy Leakage over Dependent Attributes}
\label{sec:main}
In OSDPRR, there are only two possible outcomes: the record is either released or not. Throughout this paper, we denote 
\begin{displaymath}
    M_{i} = \left\{
    \begin{array}{cl}
    1, & \mbox{if $r_{i}$ is released.} \\
    0, & \mbox{if $r_{i}$ is suppressed.}
    \end{array}
    \right.
\end{displaymath}

In addition, we define a Bernoulli random variable to indicate whether a given record $r_{i}$ is sensitive or not. 
\begin{displaymath}
    X_{i} = \left\{
    \begin{array}{cl}
    1, & \mbox{if $r_{i}$ takes value that makes $r_{i}$ not sensitive.} \\
    0, & \mbox{else.}
    \end{array}
    \right.
\end{displaymath}

For the later results, we first quantify the robustness of the OSDPRR against the exclusion attack on record $r_{i}$ when the record is not released. 
\begin{theorem} The posterior probability ratio of $X_{i}$ given that $\mathcal{M}_{i} = 0$ is given as 
\begin{equation}
\label{eq:poterior}
\frac{\mathbb{P}(X_{i} =0 | \mathcal{M}_{i} = 0)}{\mathbb{P}(X_{i} =1 | \mathcal{M}_{i} = 0)} = e^{\epsilon_{i}} \frac{\mathbb{P}(X_{i} =0 )}{\mathbb{P}(X_{i} =1)} 
\end{equation}
Moreover, the posterior probability ratio of $X_{i}$ after $r_{i}$ is not released after $n$ independent consecutive queries is given as 
\begin{equation}
\label{eq:composition}
\frac{\mathbb{P}(X_{i} =0 | \mathcal{M}_{i}^{n} = 0)}{\mathbb{P}(X_{i} =1 | \mathcal{M}_{i}^{n} = 0)} = e^{n\epsilon_{i}} \frac{\mathbb{P}(X_{i} =0 )}{\mathbb{P}(X_{i} =1)} 
\end{equation}
where $\epsilon_{i}$ is the parameter chosen for record $r_{i}$ under OSDPRR. In other words, $r_{i}$ will be released with probability $1 -e^{\epsilon_{i}}$ if $r_{i}$ is not sensitive. 
\end{theorem}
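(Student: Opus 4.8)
The plan is to apply Bayes' rule directly, exploiting the fact that under OSDPRR the conditional law of the output $\mathcal{M}_i$ given the sensitivity indicator $X_i$ is completely explicit. For the first claim, I would write
\begin{equation*}
\frac{\mathbb{P}(X_{i}=0\mid \mathcal{M}_{i}=0)}{\mathbb{P}(X_{i}=1\mid \mathcal{M}_{i}=0)}
= \frac{\mathbb{P}(\mathcal{M}_{i}=0\mid X_{i}=0)}{\mathbb{P}(\mathcal{M}_{i}=0\mid X_{i}=1)}\cdot\frac{\mathbb{P}(X_{i}=0)}{\mathbb{P}(X_{i}=1)},
\end{equation*}
where the normalizing constant $\mathbb{P}(\mathcal{M}_i = 0)$ cancels in the ratio. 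By Definition~\ref{def:osdprr}, a sensitive record ($X_i = 0$) is never released, so $\mathbb{P}(\mathcal{M}_i = 0 \mid X_i = 0) = 1$, while a non-sensitive record ($X_i = 1$) is released with probability $1 - e^{-\epsilon_i}$, hence suppressed with probability $\mathbb{P}(\mathcal{M}_i = 0 \mid X_i = 1) = e^{-\epsilon_i}$. Substituting gives the likelihood ratio $1/e^{-\epsilon_i} = e^{\epsilon_i}$, which yields \eqref{eq:poterior}.

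For the $n$-query statement, the key observation is that $\mathcal{M}_i^n = 0$ is the event that all $n$ responses suppress $r_i$, and that ``independent consecutive queries'' means the coin flips used by the mechanism are drawn afresh and independently at each query. Conditioned on the value of the record (equivalently, on $X_i$), the $n$ suppression events are therefore independent, so $\mathbb{P}(\mathcal{M}_i^n = 0 \mid X_i = 0) = 1^n = 1$ and $\mathbb{P}(\mathcal{M}_i^n = 0 \mid X_i = 1) = (e^{-\epsilon_i})^n = e^{-n\epsilon_i}$. The same Bayes-rule computation as above then produces the likelihood ratio $e^{n\epsilon_i}$ and hence \eqref{eq:composition}. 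Alternatively, \eqref{eq:composition} follows from \eqref{eq:poterior} by induction on $n$: treat the posterior odds after $n-1$ suppressions as the prior odds for the $n$-th query, and apply the single-query result, each step multiplying the odds by $e^{\epsilon_i}$.

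There is no real obstacle here beyond bookkeeping; the only point requiring care is the modeling assumption behind ``independent consecutive queries,'' namely that it is the mechanism's internal randomness that is independent across queries (the record value itself is of course fixed), so that the conditional independence used to factor $\mathbb{P}(\mathcal{M}_i^n = 0 \mid X_i)$ is legitimate. I would state this assumption explicitly at the start of the proof. I would also note in passing that the exponent $1-e^{\epsilon_i}$ appearing in the theorem statement should read $1-e^{-\epsilon_i}$, consistent with Definition~\ref{def:osdprr}, so that the release probability lies in $(0,1)$ for $\epsilon_i > 0$.
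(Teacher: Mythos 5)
Your argument is correct. Note that the paper does not actually prove this theorem in the text---its ``proof'' is a one-line deferral to the original OSDP paper---so your direct computation supplies what the paper omits rather than diverging from it. The Bayes-rule route you take is the natural one: the normalizer cancels in the odds ratio, the likelihood ratio is $\mathbb{P}(\mathcal{M}_i=0\mid X_i=0)/\mathbb{P}(\mathcal{M}_i=0\mid X_i=1)=1/e^{-\epsilon_i}=e^{\epsilon_i}$, and for $n$ queries the conditional independence of the mechanism's coin flips given $X_i$ raises this to $e^{n\epsilon_i}$. Your insistence on stating that it is the mechanism's \emph{internal randomness} that is independent across queries (the record value being fixed) is exactly the assumption needed to factor $\mathbb{P}(\mathcal{M}_i^n=0\mid X_i)$, and is the one point where the composition claim could otherwise be challenged; the induction alternative you sketch (posterior odds after $n-1$ suppressions serving as the prior for the $n$-th query) is also valid and makes the ``multiplicative composition'' reading of the result transparent. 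Your flagged typo is real: the release probability in the theorem statement should read $1-e^{-\epsilon_i}$, consistent with Definition~\ref{def:osdprr}, since $1-e^{\epsilon_i}$ is negative for $\epsilon_i>0$.
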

\begin{proof} The proof can be found in \cite{doudalis2017one}.
\end{proof}

We are interested in how much information the adversary obtains regarding the sensitivity of $r_{j}$ drawn from another attribute when the statistically dependent record $r_{i}$ is not released or released. 

\subsection{One Application with Dependency Information}
\label{subsec:one}
First, we consider the case when one application obtains $\mathcal{M}_{i}$ and wants to infer the sensitivity of $X_{j}$ where record $r_{j}$ is used by another application. 
\begin{lemma}
\label{lemma1}    
$X_{j}$ is conditionally independent to $\mathcal{M}_{i}$ given $X_{i}$
\end{lemma}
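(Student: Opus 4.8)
The plan is to reduce the statement to the single structural fact that, under OSDPRR, the suppression decision for $r_{i}$ is a function of $X_{i}$ and of a coin flip that is internal to the mechanism and hence independent of the data — in particular independent of $X_{j}$. Conditioning on $X_{i}$ then removes the only channel through which $\mathcal{M}_{i}$ could carry information about $X_{j}$, which is exactly conditional independence.

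Concretely, I would introduce a Bernoulli variable $B_{i}$ with $\mathbb{P}(B_{i} = 1) = 1 - e^{-\epsilon_{i}}$, drawn independently of the data vector $(X_{i}, X_{j})$ (and of any other per-record coins), to represent the internal randomness of OSDPRR, and write $\mathcal{M}_{i} = X_{i} B_{i}$ in accordance with Definition \ref{def:osdprr}: when $X_{i} = 0$ the record is sensitive and always suppressed, so $\mathcal{M}_{i} = 0$; when $X_{i} = 1$ it is released precisely when $B_{i} = 1$. Thus $\mathcal{M}_{i} = g(X_{i}, B_{i})$ for the deterministic map $g(s,b) = sb$. Fixing $s, m \in \{0,1\}$ and a value $x$ in the domain of $r_{j}$, and using that $B_{i}$ is independent of $(X_{i}, X_{j})$ — hence independent of $X_{j}$ given $X_{i} = s$ — one gets
\begin{align*}
\mathbb{P}(\mathcal{M}_{i} = m, X_{j} = x \mid X_{i} = s)
&= \mathbb{P}(g(s, B_{i}) = m,\ X_{j} = x \mid X_{i} = s) \\
&= \mathbb{P}(g(s, B_{i}) = m \mid X_{i} = s)\, \mathbb{P}(X_{j} = x \mid X_{i} = s) \\
&= \mathbb{P}(\mathcal{M}_{i} = m \mid X_{i} = s)\, \mathbb{P}(X_{j} = x \mid X_{i} = s),
\end{align*}
which is the claimed conditional independence. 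Equivalently, I could simply case-split on $X_{i}$: if $X_{i} = 0$ then $\mathcal{M}_{i} = 0$ almost surely, so its conditional law is degenerate and trivially independent of $X_{j}$; if $X_{i} = 1$ then $\mathcal{M}_{i} \sim \mathrm{Bernoulli}(1 - e^{-\epsilon_{i}})$ regardless of the value of $X_{j}$. In both cases the conditional distribution of $\mathcal{M}_{i}$ does not depend on $X_{j}$, which is the definition of $X_{j} \perp \mathcal{M}_{i} \mid X_{i}$.

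There is no genuine technical obstacle; the argument is essentially a bookkeeping observation about the mechanism. The one point that must be handled with care — and which I would state explicitly as part of the OSDPRR model — is the independence of the coin $B_{i}$ from the data $(X_{i}, X_{j})$, since the entire proof rests on that assumption. Everything downstream in Section \ref{sec:main} (where dependency between attributes does matter) will then funnel through $X_{i}$ rather than through $\mathcal{M}_{i}$ directly, which is precisely why this lemma is the right first step.
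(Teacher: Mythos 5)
Your proof is correct and follows essentially the same route as the paper's: both arguments rest on the single fact that $\mathcal{M}_{i}$ is determined by $X_{i}$ and the mechanism's internal randomness alone, so that $\mathbb{P}(\mathcal{M}_{i}\mid X_{i},X_{j})=\mathbb{P}(\mathcal{M}_{i}\mid X_{i})$, from which conditional independence follows by elementary manipulation. The only difference is cosmetic: the paper asserts this key identity as obvious and then converts it to the form $\mathbb{P}(X_{j}\mid\mathcal{M}_{i},X_{i})=\mathbb{P}(X_{j}\mid X_{i})$ via Bayes' theorem, whereas you justify it more explicitly by modeling the internal coin $B_{i}$ as independent of the data and derive the product factorization directly.
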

\begin{proof} It is obvious that 
\begin{equation*}
\mathbb{P}(\mathcal{M}_{i}| X_{i}, X_{j}) = \mathbb{P}(\mathcal{M}_{i}| X_{i})
\end{equation*}
since $\mathcal{M}_{i}$ is a function that is determined only by $X_{i}$. Rewriting the above equation, we obtain
\begin{equation*}
    \frac{\mathbb{P}(\mathcal{M}_{i}, X_{i},X_{j})}{\mathbb{P}(X_{i},X_{j})} = \frac{\mathbb{P}(X_{j}|\mathcal{M}_{i},X_{i}) \mathbb{P}(\mathcal{M}_{i},X_{i})}{\mathbb{P}(X_{i},X_{j})} = \frac{\mathbb{P}(\mathcal{M}_{i},X_{i})}{\mathbb{P}(X_{i})}
\end{equation*}
    Rearranging the last equality and applying Bayes' theorem, we obtain
\begin{equation}
    \mathbb{P}(X_{j}|\mathcal{M}_{i},X_{i}) = \mathbb{P}(X_{j}|X_{i})
\end{equation}
\end{proof}
The intuition is that given $X_{i}$, there is no additional information that can be obtained from $\mathcal{M}_{i}$ regarding $X_{j}$. The following theorem quantifies the privacy leakage on $X_{j}$ when the record $r_{i}$ is not released. 

\begin{theorem} 
\label{thm:main}    
Let $\delta_{1}^{(ij)} = \mathbb{P}(X_{i} = 0| X_{j}=0)$ and $\delta_{2}^{(ij)} = \mathbb{P}(X_{i}=0|X_{j}=1)$, then the information leakage on $X_{j}$ from not releasing record $r_{i}$ can be quantified as 
    \begin{eqnarray}
    \frac{\mathbb{P}(X_{j}=0|\mathcal{M}_{i}=0)}{\mathbb{P}(X_{j}=1|\mathcal{M}_{i}=0)} = \frac{\delta_{1}^{(ij)}(e^{\epsilon_{i}}-1)+1}{\delta_{2}^{(ij)}(e^{\epsilon_{i}}-1)+1} \cdot \frac{\mathbb{P}(X_{j}=0)}{\mathbb{P}(X_{j}=1)}
    \end{eqnarray}
  where $\epsilon_{i}$ is the parameter for the OSDPRR for record $r_{i}$. 
    \end{theorem}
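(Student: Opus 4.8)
The plan is to reduce the claim to a single application of Bayes' theorem together with one use of the law of total probability. First I would write, by Bayes' theorem,
\begin{equation*}
\frac{\mathbb{P}(X_{j}=0|\mathcal{M}_{i}=0)}{\mathbb{P}(X_{j}=1|\mathcal{M}_{i}=0)} = \frac{\mathbb{P}(\mathcal{M}_{i}=0|X_{j}=0)}{\mathbb{P}(\mathcal{M}_{i}=0|X_{j}=1)}\cdot\frac{\mathbb{P}(X_{j}=0)}{\mathbb{P}(X_{j}=1)},
\end{equation*}
so that the normalizing term $\mathbb{P}(\mathcal{M}_{i}=0)$ cancels and it remains only to evaluate the likelihood $\mathbb{P}(\mathcal{M}_{i}=0|X_{j}=a)$ for $a\in\{0,1\}$.

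Next I would condition on $X_{i}$ through the law of total probability,
\begin{equation*}
\mathbb{P}(\mathcal{M}_{i}=0|X_{j}=a) = \sum_{b\in\{0,1\}}\mathbb{P}(\mathcal{M}_{i}=0|X_{i}=b,X_{j}=a)\,\mathbb{P}(X_{i}=b|X_{j}=a),
\end{equation*}
and then invoke Lemma \ref{lemma1} --- or, more directly, the fact used in its proof that $\mathcal{M}_{i}$ is a randomized function of $X_{i}$ alone --- to replace $\mathbb{P}(\mathcal{M}_{i}=0|X_{i}=b,X_{j}=a)$ by $\mathbb{P}(\mathcal{M}_{i}=0|X_{i}=b)$. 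The OSDPRR mechanism of Definition \ref{def:osdprr} then fixes those two numbers: a sensitive record ($X_{i}=0$) is always suppressed, so $\mathbb{P}(\mathcal{M}_{i}=0|X_{i}=0)=1$; a non-sensitive record ($X_{i}=1$) is suppressed with probability $e^{-\epsilon_{i}}$, so $\mathbb{P}(\mathcal{M}_{i}=0|X_{i}=1)=e^{-\epsilon_{i}}$.

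Writing $p_{a}=\mathbb{P}(X_{i}=0|X_{j}=a)$, the previous two steps give $\mathbb{P}(\mathcal{M}_{i}=0|X_{j}=a)=p_{a}+e^{-\epsilon_{i}}(1-p_{a})=e^{-\epsilon_{i}}+(1-e^{-\epsilon_{i}})p_{a}$, and by definition $p_{0}=\delta_{1}^{(ij)}$, $p_{1}=\delta_{2}^{(ij)}$. Forming the ratio of the $a=0$ to the $a=1$ likelihood and multiplying numerator and denominator by $e^{\epsilon_{i}}$ turns each expression $e^{-\epsilon_{i}}+(1-e^{-\epsilon_{i}})p$ into $p(e^{\epsilon_{i}}-1)+1$, which is exactly the claimed factor; multiplying back by the prior ratio finishes the proof. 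I do not anticipate a real obstacle --- the computation is essentially bookkeeping --- the one point needing care being the conditional-independence reduction, which Lemma \ref{lemma1} already supplies. As a sanity check, when $\delta_{1}^{(ij)}=\delta_{2}^{(ij)}$, i.e.\ $X_{i}$ carries no information about $X_{j}$, the factor collapses to $1$ and there is no leakage, as expected.
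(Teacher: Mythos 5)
Your proof is correct, and it reaches the result by a genuinely different (and somewhat cleaner) route than the paper. The paper expands the posterior $\mathbb{P}(X_{j}=x_{j}\mid\mathcal{M}_{i}=0)$ over $X_{i}$ via Lemma \ref{lemma1}, substitutes the single-record posterior ratio $\mathbb{P}(X_{i}=0\mid\mathcal{M}_{i}=0)/\mathbb{P}(X_{i}=1\mid\mathcal{M}_{i}=0)=e^{\epsilon_{i}}\,\mathbb{P}(X_{i}=0)/\mathbb{P}(X_{i}=1)$ from equation (\ref{eq:poterior}), and then must Bayes-invert $\mathbb{P}(X_{j}=x_{j}\mid X_{i}=0)$ back into $\mathbb{P}(X_{i}=0\mid X_{j}=x_{j})$ before the quantities $\delta_{1}^{(ij)},\delta_{2}^{(ij)}$ appear. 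You instead apply Bayes' theorem first, reducing the claim to the likelihood ratio $\mathbb{P}(\mathcal{M}_{i}=0\mid X_{j}=0)/\mathbb{P}(\mathcal{M}_{i}=0\mid X_{j}=1)$, and then compute each likelihood by total probability over $X_{i}$ using only the mechanism's suppression probabilities $1$ and $e^{-\epsilon_{i}}$ from Definition \ref{def:osdprr}. Both arguments rest on the same conditional independence $\mathcal{M}_{i}\perp X_{j}\mid X_{i}$, but yours is more self-contained (it does not invoke the prior posterior-ratio theorem at all) and the $\delta$'s emerge directly as $\mathbb{P}(X_{i}=0\mid X_{j}=a)$ with no inversion step. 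The one thing the paper's route buys is that Corollary \ref{cor:composition} follows by literally replacing $e^{\epsilon_{i}}$ with $e^{n\epsilon_{i}}$ in equation (\ref{eq:poterior}); under your route the analogous extension is equally immediate, since $\mathbb{P}(\mathcal{M}_{i}^{n}=0\mid X_{i}=1)=e^{-n\epsilon_{i}}$, so nothing is lost.
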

    \begin{proof} Posterior probability ratio of $X_{j}$ given $\mathcal{M}_{i} = 0$ can be written as the following.
    \begin{eqnarray}
    &&\frac{\mathbb{P}(X_{j}=0|\mathcal{M}_{i}=0)}{\mathbb{P}(X_{j}=1|\mathcal{M}_{i}=0)}\\
    &=& \frac{\sum_{x_{i}}\mathbb{P}(X_{i}=x_{i}|\mathcal{M}_{i}=0)\mathbb{P}(X_{j}=0|X_{i}=x_{i})}{\sum_{x_{i}}\mathbb{P}(X_{i}=x_{i}|\mathcal{M}_{i}=0)\mathbb{P}(X_{j}=1|X_{i}=x_{i})} \label{eq4}
    \end{eqnarray}
    This is because $\mathbb{P}(X_{j} = x_{j}| \mathcal{M}_{i}=0)$ can be decomposed as 
    \begin{eqnarray*}
    \mathbb{P}(X_{j} = x_{j}| \mathcal{M}_{i}=0) &=& \mathbb{P}(X_{j} = x_{j}, X_{i}=1| \mathcal{M}_{i}=0)\\
    &+& \mathbb{P}(X_{j} = x_{j}, X_{i}=0 | \mathcal{M}_{i}=0)
    \end{eqnarray*}
    and $\mathbb{P}(X_{j} = x_{j}, X_{i}=x_{i}| \mathcal{M}_{i}=0) = \mathbb{P}(X_{i} = x_{i}| \mathcal{M}_{i} = 0 ) \cdot \mathbb{P}(X_{j} = x_{j} | X_{i} = x_{i}, \mathcal{M}_{i} = 0)$. Moreover, from Lemma \ref{lemma1}, we have $\mathbb{P}(X_{j} = x_{j}| X_{i} = x_{i}, \mathcal{M}_{i} = 0) = \mathbb{P}(X_{j} = x_{j}| X_{i} = x_{i})$.
    
    Dividing both the numerator and denominator of expression (\ref{eq4}) by $\mathbb{P}(X_{i}=1|\mathcal{M}_{i}=0)$, and applying equation (\ref{eq:poterior}), we obtain
    \begin{eqnarray}
    &&\frac{\mathbb{P}(X_{j}=0|\mathcal{M}_{i}=0)}{\mathbb{P}(X_{j}=1|\mathcal{M}_{i}=0)}=\\
    && \frac{e^{\epsilon_{i}} \frac{\mathbb{P}(X_{i}=0)}{\mathbb{P}(X_{i}=1)}\mathbb{P}(X_{j}=0|X_{i}=0) + \mathbb{P}(X_{j}=0|X_{i}=1)}{e^{\epsilon_{i}} \frac{\mathbb{P}(X_{i}=0)}{\mathbb{P}(X_{i}=1)}\mathbb{P}(X_{j}=1|X_{i}=0) + \mathbb{P}(X_{j}=1|X_{i}=1)} \label{eq6}
    \end{eqnarray}
    However, the first terms in both numerator and denominator can be expressed as  
    \begin{eqnarray}
    && e^{\epsilon_{i}} \frac{\mathbb{P}(X_{i}=0)}{\mathbb{P}(X_{i}=1)}\cdot \mathbb{P}(X_{j}=x_{j}|X_{i}=0)\\ 
    &=&e^{\epsilon_{i}} \frac{\mathbb{P}(X_{i}=0)}{\mathbb{P}(X_{i}=1)}  \frac{\mathbb{P}(X_{i}=0|X_{j}=x_{j}) \mathbb{P}(X_{j}=x_{j})}{\mathbb{P}(X_{i}=0)}\\
    &=&e^{\epsilon_{i}} \frac{ \mathbb{P}(X_{i}=0|X_{j}=x_{j})}{\mathbb{P}(X_{i}=1)} \mathbb{P}(X_{j}=x_{j}) \label{eq9}
    \end{eqnarray}
    Replacing the first term of both the numerator and the denominator of (\ref{eq6}) with the expression (\ref{eq9}), we obtain the desired result.
    \end{proof}

    Here are some observations from Theorem \ref{thm:main}. If $\epsilon_{i} = 0$, then the record $r_{i}$ is not released even if it is not sensitive. This results in  $\frac{\delta_{1}^{(ij)}(e^{\epsilon_{i}}-1)+1}{\delta_{2}^{(ij)}(e^{\epsilon_{i}}-1)+1} = 1$. Therefore, no information is obtained regarding $X_{j}$, Also, if $X_{i}$ and $X_{j}$ are independent, then $\delta_{1}^{(ij)} = \delta_{2}^{(ij)}$, which yields the same result. On other other hand, as $\epsilon_{i} \to \infty$ ($r_{i}$ is released with probability 1 if it is not sensitive), then $\lim_{\epsilon_{i} \to \infty}\frac{\delta_{1}^{(ij)}(e^{\epsilon_{i}}-1)+1}{\delta_{2}^{(ij)}(e^{\epsilon_{i}}-1)+1} \to \frac{\delta_{1}^{(ij)}}{\delta_{2}^{(ij)}}$. Therefore, privacy guarantees regarding $X_{j}$ is completely determined by the statistical relationship between $X_{i}$ and $X_{j}$. 

    The following Corollary quantifies the information leakage of $X_{j}$ when $r_{i}$ is not released for $n$ consecutive queries under OSDPRR with parameter $\epsilon_{i}$.

    \begin{corollary}
    \label{cor:composition} Posterior probability ratio of $X_{j}$ given $\mathcal{M}_{i}^{n} = 0$ is given as 
    \begin{eqnarray}
        \frac{\mathbb{P}(X_{j}=0|\mathcal{M}_{i}^{n}=0)}{\mathbb{P}(X_{j}=1|\mathcal{M}_{i}^{n}=0)} = \frac{\delta_{1}^{(ij)}(e^{n\epsilon_{i}}-1)+1}{\delta_{2}^{(ij)}(e^{n\epsilon_{i}}-1)+1} \cdot \frac{\mathbb{P}(X_{j}=0)}{\mathbb{P}(X_{j}=1)}
    \end{eqnarray}
    where $\delta_{1}^{(ij)}$ and $\delta_{2}^{(ij)}$ are defined in Theorem \ref{thm:main}.
    \end{corollary}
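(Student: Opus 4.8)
The plan is to mirror the proof of Theorem~\ref{thm:main} essentially verbatim, replacing the single release indicator $\mathcal{M}_i$ by the vector $\mathcal{M}_i^n = (\mathcal{M}_{i,1},\dots,\mathcal{M}_{i,n})$ of outcomes of the $n$ independent queries and the event $\mathcal{M}_i = 0$ by $\mathcal{M}_i^n = 0$ (all $n$ suppressions). The only places that proof used anything specific about $\mathcal{M}_i$ were (i) the conditional-independence statement of Lemma~\ref{lemma1} and (ii) the single-record posterior ratio (\ref{eq:poterior}). For (i), I would note that $\mathcal{M}_i^n$ is still a randomized function of $X_i$ alone — each query draws fresh randomness and the release rule depends only on whether $r_i$ is sensitive — so $\mathbb{P}(\mathcal{M}_i^n \mid X_i, X_j) = \mathbb{P}(\mathcal{M}_i^n \mid X_i)$, and the identical Bayes manipulation from the proof of Lemma~\ref{lemma1} yields $\mathbb{P}(X_j \mid \mathcal{M}_i^n, X_i) = \mathbb{P}(X_j \mid X_i)$. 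For (ii), equation (\ref{eq:composition}) of Theorem~1 already supplies the needed replacement: $\mathbb{P}(X_i=0\mid\mathcal{M}_i^n=0)/\mathbb{P}(X_i=1\mid\mathcal{M}_i^n=0) = e^{n\epsilon_i}\,\mathbb{P}(X_i=0)/\mathbb{P}(X_i=1)$.

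With these two ingredients, I would carry out the same three steps as before. First, decompose $\mathbb{P}(X_j = x_j \mid \mathcal{M}_i^n = 0) = \sum_{x_i} \mathbb{P}(X_i = x_i \mid \mathcal{M}_i^n = 0)\,\mathbb{P}(X_j = x_j \mid X_i = x_i)$ using the law of total probability and the generalized Lemma~\ref{lemma1}. Second, form the ratio for $x_j = 0$ over $x_j = 1$, divide numerator and denominator by $\mathbb{P}(X_i = 1 \mid \mathcal{M}_i^n = 0)$, and substitute (\ref{eq:composition}) so that the $x_i = 0$ terms each pick up a factor $e^{n\epsilon_i}\,\mathbb{P}(X_i=0)/\mathbb{P}(X_i=1)$. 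Third, rewrite $e^{n\epsilon_i}\frac{\mathbb{P}(X_i=0)}{\mathbb{P}(X_i=1)}\mathbb{P}(X_j=x_j\mid X_i=0) = e^{n\epsilon_i}\frac{\mathbb{P}(X_i=0\mid X_j=x_j)}{\mathbb{P}(X_i=1)}\mathbb{P}(X_j=x_j)$ by Bayes' rule exactly as in (\ref{eq9}), factor $\mathbb{P}(X_j=x_j)/\mathbb{P}(X_i=1)$ out of each of the numerator and denominator, and use $\delta_1^{(ij)} = \mathbb{P}(X_i=0\mid X_j=0)$, $\delta_2^{(ij)} = \mathbb{P}(X_i=0\mid X_j=1)$ together with $\mathbb{P}(X_i=1\mid X_j=x_j) = 1 - \mathbb{P}(X_i=0\mid X_j=x_j)$. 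Collecting the terms $\delta_k^{(ij)} e^{n\epsilon_i} + (1 - \delta_k^{(ij)}) = \delta_k^{(ij)}(e^{n\epsilon_i}-1)+1$ then gives the claimed expression.

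The only real obstacle is step (i): one must be confident that conditioning on the entire transcript $\mathcal{M}_i^n$, rather than a single outcome, still leaks nothing about $X_j$ beyond $X_i$. This is immediate once one observes that the $n$ queries use independent coins and each release decision is a function of $X_i$ only, so $\mathcal{M}_i^n$ is obtained by applying a Markov kernel to $X_i$; hence $X_j \to X_i \to \mathcal{M}_i^n$ is a Markov chain and Lemma~\ref{lemma1} goes through unchanged. Everything afterward is the same bookkeeping as in Theorem~\ref{thm:main}, with $n\epsilon_i$ in place of $\epsilon_i$.
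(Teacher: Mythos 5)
Your proposal is correct and follows essentially the same route as the paper, which likewise proves the corollary by repeating the argument of Theorem~\ref{thm:main} with $\mathcal{M}_{i}$ replaced by $\mathcal{M}_{i}^{n}$ and equation~(\ref{eq:composition}) substituted for equation~(\ref{eq:poterior}). Your explicit verification that Lemma~\ref{lemma1} extends to the full transcript $\mathcal{M}_{i}^{n}$ (via the Markov chain $X_{j} \to X_{i} \to \mathcal{M}_{i}^{n}$) is a detail the paper leaves implicit, and is a worthwhile addition.
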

    \begin{proof} The proof is straightforward from the proof of Theorem \ref{thm:main} by replacing $\mathcal{M}_{i}$ with $\mathcal{M}_{i}^{n}$ and by applying equation (\ref{eq:composition}) instead of equation (\ref{eq:poterior}). 
    \end{proof} 

    For completeness, we now derive the posterior probability ratio when $\mathcal{M}_{i} = 1$. 
    \begin{theorem}
    \label{thm:released}
    The posterior probability ratio of $X_{j}$ given that record $r_{i}$ is released is given as 
    \begin{equation}
        \frac{\mathbb{P}(X_{j}=0|\mathcal{M}_{i}=1)}{\mathbb{P}(X_{j}=1|\mathcal{M}_{i}=1)} = \frac{1-\delta_{1}^{(ij)}}{1-\delta_{2}^{(ij)}} \cdot  \frac{\mathbb{P}(X_{j}=0)}{\mathbb{P}(X_{j}=1)}
    \end{equation}
    \begin{proof} The posterior probability ratio of $X_{j}$ can be written as 
    \begin{eqnarray}
    && \frac{\mathbb{P}(X_{j}=0|\mathcal{M}_{i}=1)}{\mathbb{P}(X_{j}=1|\mathcal{M}_{i}=1)} \\
    &=& \frac{\mathbb{P}(X_{j} = 0, X_{i} = 1, \mathcal{M}_{i}=1)}{\mathbb{P}(X_{j} = 1, X_{i} = 1, \mathcal{M}_{i}=1)}\\
    &=& \frac{\mathbb{P}(X_{i}=1| X_{j} = 0)}{\mathbb{P}(X_{i}=1| X_{j} = 1)}\cdot \frac{\mathbb{P}(X_{j} = 0)}{\mathbb{P}(X_{j} = 1)}
    \end{eqnarray}
    where the first equality is from the fact that the probability of joint event of $X_{i} = 0$ and $\mathcal{M}_{i}=1$ is zero, and the second inequality is due to the fact that $\mathcal{M}_{i}$ is conditionally independent of $X_{j}$ given $X_{i}$. 
    \end{proof}

    \end{theorem}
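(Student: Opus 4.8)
The plan is to follow the same template as the proof of Theorem~\ref{thm:main}, but to exploit the fact that the observation $\{\mathcal{M}_i = 1\}$ is far more informative than $\{\mathcal{M}_i = 0\}$: under OSDPRR a sensitive record is never released, so $\{\mathcal{M}_i = 1\} \subseteq \{X_i = 1\}$. Hence for each $x_j \in \{0,1\}$ the joint probability collapses to a single term, $\mathbb{P}(X_j = x_j, \mathcal{M}_i = 1) = \mathbb{P}(X_j = x_j, X_i = 1, \mathcal{M}_i = 1)$, so the sum over $x_i$ that appeared in equations~(\ref{eq4})--(\ref{eq6}) disappears and the released case becomes strictly simpler than the suppressed one. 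Writing the posterior ratio as $\mathbb{P}(X_j = 0, \mathcal{M}_i = 1)/\mathbb{P}(X_j = 1, \mathcal{M}_i = 1)$ is then the right starting point.

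Next I would factor each of these two joint probabilities by the chain rule together with Lemma~\ref{lemma1}. Writing $\mathbb{P}(X_j = x_j, X_i = 1, \mathcal{M}_i = 1) = \mathbb{P}(\mathcal{M}_i = 1 \mid X_i = 1, X_j = x_j)\,\mathbb{P}(X_i = 1 \mid X_j = x_j)\,\mathbb{P}(X_j = x_j)$, I would use Lemma~\ref{lemma1} to replace $\mathbb{P}(\mathcal{M}_i = 1 \mid X_i = 1, X_j = x_j)$ by $\mathbb{P}(\mathcal{M}_i = 1 \mid X_i = 1) = 1 - e^{-\epsilon_i}$, the OSDPRR release probability for a non-sensitive record, which does not depend on $x_j$. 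Taking the ratio of the $x_j = 0$ term to the $x_j = 1$ term therefore cancels this factor, leaving $\big(\mathbb{P}(X_i = 1 \mid X_j = 0)\,\mathbb{P}(X_j = 0)\big) / \big(\mathbb{P}(X_i = 1 \mid X_j = 1)\,\mathbb{P}(X_j = 1)\big)$. Finally I would substitute the definitions of $\delta_1^{(ij)}$ and $\delta_2^{(ij)}$ from Theorem~\ref{thm:main}: since $\delta_1^{(ij)} = \mathbb{P}(X_i = 0 \mid X_j = 0)$ and $\delta_2^{(ij)} = \mathbb{P}(X_i = 0 \mid X_j = 1)$, we have $\mathbb{P}(X_i = 1 \mid X_j = 0) = 1 - \delta_1^{(ij)}$ and $\mathbb{P}(X_i = 1 \mid X_j = 1) = 1 - \delta_2^{(ij)}$, which gives the claimed identity.

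I do not expect a genuinely hard step here; the argument is essentially bookkeeping once the containment $\{\mathcal{M}_i = 1\} \subseteq \{X_i = 1\}$ is observed, and it contrasts nicely with Theorem~\ref{thm:main} where the extra $X_i = 0$ branch survives. The only points that warrant care are (i) verifying that Lemma~\ref{lemma1}, stated for $\mathcal{M}_i$ as a random variable, does supply the conditional independence needed for the particular outcome $\mathcal{M}_i = 1$ (it does, since conditional independence of the variables passes to every event in the $\sigma$-algebra they generate); and (ii) the degenerate case $\delta_2^{(ij)} = 1$, i.e.\ $X_i = 0$ almost surely whenever $X_j = 1$, in which $\mathbb{P}(X_j = 1, \mathcal{M}_i = 1) = 0$ and both sides are infinite, so the identity should be read as holding whenever $\mathbb{P}(\mathcal{M}_i = 1) > 0$ with matching conventions on the two sides.
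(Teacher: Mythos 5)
Your argument is correct and follows the same route as the paper's proof: both collapse the joint probability to the single $X_i=1$ term using the fact that a sensitive record is never released, then apply the conditional independence of $\mathcal{M}_i$ and $X_j$ given $X_i$ to cancel the release probability and substitute the definitions of $\delta_1^{(ij)}$ and $\delta_2^{(ij)}$. Your additional remarks on the degenerate case $\delta_2^{(ij)}=1$ are a harmless refinement the paper omits.
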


    Theorems \ref{thm:main}, \ref{thm:released} and Corollary \ref{cor:composition} imply that the application that utilizes record $r_{i}$ can infer the likely sensitivity of $r_{j}$ from another attribute given additional information in the form of conditional probabilities between $X_{i}$ and $X_{j}$.

    \subsection{Colluding Applications with Dependency Information}
    \label{subsec:colluding}
    We now consider the case when colluding applications are actively exchanging information. In this case, an application will not only have $\mathcal{M}_{i}$ that it has obtained from a query but also $\mathcal{M}_{j}$ which is obtained from another application along with the dependency information.

    \begin{theorem}
    \label{thm:collusion}
    The posterior probability ratio of $X_{j}$ given $\mathcal{M}_{i} = 0$ and $\mathcal{M}_{j} = 0$ is given as  
    \begin{equation*}
     \frac{\mathbb{P}(X_{j}=0|\mathcal{M}_{i}=0, \mathcal{M}_{j} = 0)}{\mathbb{P}(X_{j}=1|\mathcal{M}_{i}=0, \mathcal{M}_{j} = 0)} = f_{1}^{(ij)}(\epsilon_{i}) e^{\epsilon_{j}} \frac{\mathbb{P}(X_{j}=0)}{\mathbb{P}(X_{j}=1)}
    \end{equation*}
    and 
    \begin{equation*}
        \frac{\mathbb{P}(X_{j}=0|\mathcal{M}_{i}=1, \mathcal{M}_{j} = 0)}{\mathbb{P}(X_{j}=1|\mathcal{M}_{i}=1, \mathcal{M}_{j} = 0)} = f_{2} e^{\epsilon_{j}} \frac{\mathbb{P}(X_{j}=0)}{\mathbb{P}(X_{j}=1)}
       \end{equation*}
    where $f_{1}^{(ij)}(\epsilon_{i})$ and $f_{2}^{(ij)}$ are defined as 
    \begin{equation}
    \label{eq:def}
    f_{1}^{(ij)}(\epsilon_{i}) = \frac{\delta_{1}^{(ij)}(e^{\epsilon_{i}}-1)+1}{\delta_{2}^{(ij)}(e^{\epsilon_{i}}-1)+1}, f_{2}^{(ij)} = \frac{1-\delta_1^{(ij)}}{1-\delta_2^{(ij)}}
    \end{equation}
    and $\epsilon_{i}$ and $\epsilon_{j}$ are parameters of OSDPRR for records $r_{i}$ and $r_{j}$ respectively. 
    \end{theorem}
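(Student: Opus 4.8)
The plan is to reduce the joint conditioning on $(\mathcal{M}_i,\mathcal{M}_j)$ to the single conditioning on $\mathcal{M}_i$ that is already settled by Theorems \ref{thm:main} and \ref{thm:released}, peeling off a clean $e^{\epsilon_j}$ factor coming from the $\mathcal{M}_j$ coordinate. First I would apply Bayes' rule in the $\mathcal{M}_j$ coordinate only: for $a\in\{0,1\}$,
\begin{equation*}
\frac{\mathbb{P}(X_j=0\mid \mathcal{M}_i=a,\mathcal{M}_j=0)}{\mathbb{P}(X_j=1\mid \mathcal{M}_i=a,\mathcal{M}_j=0)}
= \frac{\mathbb{P}(\mathcal{M}_j=0\mid X_j=0,\mathcal{M}_i=a)}{\mathbb{P}(\mathcal{M}_j=0\mid X_j=1,\mathcal{M}_i=a)}\cdot\frac{\mathbb{P}(X_j=0\mid \mathcal{M}_i=a)}{\mathbb{P}(X_j=1\mid \mathcal{M}_i=a)},
\end{equation*}
which is just cancelling $\mathbb{P}(\mathcal{M}_i=a,\mathcal{M}_j=0)$ and then $\mathbb{P}(\mathcal{M}_i=a)$ from numerator and denominator.

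The second step is the key conditional-independence observation, an analogue of Lemma \ref{lemma1} for the pair of mechanisms: since $\mathcal{M}_j$ is a randomized function of $X_j$ alone whose internal coin is independent of $(X_i,X_j)$ and of the coin driving $\mathcal{M}_i$, we have $\mathbb{P}(\mathcal{M}_j=0\mid X_j=x_j,\mathcal{M}_i=a)=\mathbb{P}(\mathcal{M}_j=0\mid X_j=x_j)$. By Definition \ref{def:osdprr} this equals $1$ when $x_j=0$ (a sensitive record is always suppressed) and $e^{-\epsilon_j}$ when $x_j=1$ (a non-sensitive record is released with probability $1-e^{-\epsilon_j}$), so the leading ratio above collapses to exactly $e^{\epsilon_j}$.

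Finally, I would substitute the already-established single-query results into the remaining factor $\mathbb{P}(X_j=0\mid\mathcal{M}_i=a)/\mathbb{P}(X_j=1\mid\mathcal{M}_i=a)$: for $a=0$ Theorem \ref{thm:main} gives $f_1^{(ij)}(\epsilon_i)\,\mathbb{P}(X_j=0)/\mathbb{P}(X_j=1)$, and for $a=1$ Theorem \ref{thm:released} gives $f_2^{(ij)}\,\mathbb{P}(X_j=0)/\mathbb{P}(X_j=1)$; multiplying by $e^{\epsilon_j}$ yields the two claimed identities. If a self-contained derivation is preferred over citing the earlier theorems, one can instead expand $\mathbb{P}(\mathcal{M}_i=a\mid X_j=x_j)=\sum_{x_i}\mathbb{P}(\mathcal{M}_i=a\mid X_i=x_i)\,\mathbb{P}(X_i=x_i\mid X_j=x_j)$, insert $\mathbb{P}(\mathcal{M}_i=0\mid X_i=0)=1$, $\mathbb{P}(\mathcal{M}_i=0\mid X_i=1)=e^{-\epsilon_i}$, and simplify using $\delta_1^{(ij)},\delta_2^{(ij)}$ exactly as in the proof of Theorem \ref{thm:main} to recover $f_1^{(ij)}$ and $f_2^{(ij)}$.

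The main obstacle is the conditional-independence step: one must be precise about the underlying probability model and argue that learning $\mathcal{M}_i=a$ conveys no extra information about the outcome of the $\mathcal{M}_j$ coin beyond what $X_j$ already determines, i.e. that the two mechanisms' randomness is mutually independent and independent of the data $(X_i,X_j)$. A secondary nuisance is the usual treatment of degenerate cases (e.g. $\mathbb{P}(\mathcal{M}_i=a)=0$ when $\epsilon_i=0$, or $\delta_2^{(ij)}=1$ causing a denominator to vanish), which I would note are excluded by the standing assumptions $\epsilon_i,\epsilon_j>0$ and nondegenerate priors, consistently with how such cases are handled elsewhere in the paper.
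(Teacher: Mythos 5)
Your proposal is correct and follows essentially the same route as the paper: both factor the joint posterior ratio into $\frac{\mathbb{P}(X_j=0\mid\mathcal{M}_i=m_i)}{\mathbb{P}(X_j=1\mid\mathcal{M}_i=m_i)}\cdot\frac{\mathbb{P}(\mathcal{M}_j=m_j\mid X_j=0)}{\mathbb{P}(\mathcal{M}_j=m_j\mid X_j=1)}$ via the chain rule and the conditional independence of $\mathcal{M}_j$ from $\mathcal{M}_i$ given $X_j$, identify the second factor as $e^{\epsilon_j}$ from the OSDPRR definition, and import Theorems \ref{thm:main} and \ref{thm:released} for the first factor. Your added care about the independence of the mechanisms' internal randomness and the degenerate cases is a reasonable refinement of the same argument, not a different one.
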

    \begin{proof} In general, the posterior probability ratio can be written as 

    \begin{eqnarray}
    &&\frac{\mathbb{P}(X_{j}=0|\mathcal{M}_{i}=m_{i}, \mathcal{M}_{j} = m_{j})}{\mathbb{P}(X_{j}=1|\mathcal{M}_{i}=m_{i}, \mathcal{M}_{j} = m_{j})}\\
    &=&\frac{\mathbb{P}(X_{j} = 0, \mathcal{M}_{i} = m_{i}, \mathcal{M}_{j} = m_{j})}{\mathbb{P}(X_{j} = 1, \mathcal{M}_{i} = m_{i}, \mathcal{M}_{j} = m_{j})}\\
    &=& \frac{\mathbb{P}(X_{j} = 0|\mathcal{M}_{i} =m_{i})}{\mathbb{P}(X_{j} = 1 |\mathcal{M}_{i} =m_{i})} \cdot 
    \frac{\mathbb{P}(M_{j} = m_{j}| X_{j} = 0)}{\mathbb{P}(M_{j} = m_{j} | X_{j}=1)} \label{eq:general}
    \end{eqnarray}
    where the second equality is from the chain rule and the fact that $\mathcal{M}_{j}$ is conditionally independent to $\mathcal{M}_{i}$ given $X_{j}$. The first term of equation (\ref{eq:general}) is derived from Theorem \ref{thm:main} for $m_{i} =0$ and the second term is $e^{\epsilon_{j}}$ from the definition of OSDPRR for the case of $m_{j} = 0$. Similarly the case when $\mathcal{M}_{i} = 1$ is derived in Theorem \ref{thm:released}.
    \end{proof}

    Theorem \ref{thm:collusion} shows that similar to the case of multiple queries shown in (\ref{eq:composition}), the same multiplicative composition rule holds for the case of two colluding applications. 
    
    It is interesting to note that $f_{1}^{(ij)}(\epsilon_{i})$ or $f_{2}^{(ij)}$ defined in Theorem \ref{thm:collusion} is not necessarily greater than or equal to 1 unlike the multiplicative term $e^{\epsilon_{i}}$ in basic OSDP shown in (\ref{eq:poterior}). For the case when $\delta_{1}^{(ij)} < \delta_{2}^{(ij)}$, $f_{1}^{(ij)}(\epsilon_{i}) < 1$ for all $\epsilon_{i} \geq 0$. This implies that when given two pieces of information $\mathcal{M}_{i} = 0, \mathcal{M}_{j}=0$, it is possible to construct cases where the term $f_{1}^{(ij)}(\epsilon_{i}) e^{\epsilon_{j}}$ is approximately equal to 1. At a high level, $\mathcal{M}_{j} = 0$ increases the likelihood of $X_{j}$ being 0. At the same time, having $\mathcal{M}_{i}=0$ will increase the likelihood of $X_{i} =0$, but $X_{i}$ being 0 may increase the likelihood of $X_{j} =0$. In other words, the two pieces of information $\mathcal{M}_{i}$ and $\mathcal{M}_{j}$ may cancel each other's effect on the inference of $X_{j}$. Similarly, the same argument can be made when $\delta_{1}^{(ij)} < \delta_{2}^{(ij)}$ for the term $f_{2}^{(ij)} e^{\epsilon_{j}}$.  

    On the other hand, for the case when $\delta_{1}^{(ij)} > \delta_{2}^{(ij)}$, the two pieces of information $\mathcal{M}_{i} = \mathcal{M}_{j} = 0$ reinforces each other and result in higher likelihood of $X_{j} = 0$.

    \subsection{Choosing Privacy Parameters}
    \label{subsec:choice}
    Having higher values of $\epsilon_{i}$s will increase the utility of the data since non-sensitive records will be released with higher probabilities. However, when choosing a value of $\epsilon_{i}$, one needs to consider not only the information leakage of $X_{i}$ but also overall information leakage of $X_{j}$ through $\mathcal{M}_{i}$. 

    One established metric to quantify the average information leakage is mutual information \cite{cover2012elements}. We define the overall information leakage from $\mathcal{M}_{i}$ as $I_{i}$, mathematically defined as 
    \begin{equation}
    \label{def:information}
    I_{i}(\epsilon_{i}) = I(X_{i}; \mathcal{M}_{i}) + \sum_{j} I(X_{j}; \mathcal{M}_{i})
    \end{equation}
    In other words, $I_{i}$ is the sum of all mutual information between $X_{i}$, $X_{j}$ and $\mathcal{M}_{i}$. 
    The term $I_{i}(\epsilon_{i})$ can be computed using the results derived in the previous subsections \ref{subsec:one} and \ref{subsec:colluding}. The first term can be computed as 
    \begin{equation*}
    I(X_{i}; \mathcal{M}_{i}) = H(X_{i}) - \sum_{m_{i}}H(X_{i} | \mathcal{M}_{i} = m_{i}) \mathbb{P}(\mathcal{M}_{i} = m_{i})
    \end{equation*}
    where $H(X_{i})$ and $H(X_{i} | \mathcal{M}_{i} = m_{i})$ are defined as \cite{cover2012elements} 
    \begin{eqnarray*}
    &&H(X_{i}) = -\sum_{x_{i}} \mathbb{P}(X_{i} = x_{i}) \log \mathbb{P}(X_{i} = x_{i}),\\
    &&H(X_{i}| \mathcal{M}_{i} = m_{i})=\\
    &&-\sum_{x_{i}} \mathbb{P}(X_{i} = x_{i} | \mathcal{M}_{i} = m_{i}) \log \mathbb{P}(X_{i} = x_{i}|\mathcal{M}_{i} = m_{i})
    \end{eqnarray*}
    and $\mathbb{P}(\mathcal{M}_{i} = m_{i})$ can be computed as 
    \begin{eqnarray*}
    &&\mathbb{P}(\mathcal{M}_{i} = 0) = \\
    &&\mathbb{P}(\mathcal{M}_{i} = 0 | X_{i} =0) \mathbb{P}(X_{i} = 0)\\
    &+&\mathbb{P}(\mathcal{M}_{i} = 0 | X_{i} =1) \mathbb{P}(X_{i} = 1)\\ &=& \mathbb{P}(X_{i} = 0) + e^{-\epsilon_{i}} \mathbb{P}(X_{i} = 1)
    \end{eqnarray*}
    and similarly,
    \begin{eqnarray*}
    &&\mathbb{P}(\mathcal{M}_{i} = 1) =(1- e^{-\epsilon_{i}}) \mathbb{P}(X_{i} = 1)
    \end{eqnarray*}
    For simplicity, denote $\mathbb{P}(X_{i} = 0) = \theta_{i}$. Then, given these equations, we can simplify $I(X_{i}; \mathcal{M}_{i})$ as 
    \begin{equation*}
    I(X_{i}; \mathcal{M}_{i}) = H(X_{i}) - H_{2}(\theta_{1}^{(ii)}) \mathbb{P}(\mathcal{M}_{i} = 0)
    \end{equation*}
    where $\theta_{1}^{(ii)} = \frac{\theta_{i}}{e^{-\epsilon_{i}}(1-\theta_{i}) + \theta_{i}}$, and $H_{2}(\theta)$ is the binary entropy defined as 
    \begin{equation*}
    H_{2}(\theta) = -\theta \log\theta - (1-\theta) \log(1-\theta)
    \end{equation*}
    Similarly, $I(X_{j}; \mathcal{M}_{i})$ is given as 
    \begin{eqnarray*}
    I(X_{j}; \mathcal{M}_{i}) &=& H(X_{j}) - H_{2}(\theta_1^{(ij)})(\theta_{i} + e^{-\epsilon_{i}}(1-\theta_{i}))\\
    &-& H_{2}(\theta_2^{(ij)})(1-e^{-\epsilon_{i}})(1-\theta_i)
    \end{eqnarray*}
    where $\theta_{j} = \mathbb{P}(X_{j} = 0)$ and $\theta_{1}^{(ij)}$ and $\theta_{2}^{(ij)}$ can be computed from equations in Theorems \ref{thm:main} and \ref{thm:released} given as
    \begin{equation*}
    \frac{\theta_{1}^{(ij)}}{1-\theta_{1}^{(ij)}} = f_{1}^{(ij)}(\epsilon_{i}) \frac{\theta_{j}}{1-\theta_{j}}, \frac{\theta_{2}^{(ij)}}{1-\theta_{2}^{(ij)}} = f_{2}^{(ij)} \frac{\theta_{j}}{1-\theta_{j}}
    \end{equation*}
    Therefore, $I_{i}(\epsilon_{i})$ can be written explicitly as a function of $\epsilon_{i}$. Given this, one possible optimization problem that trades off between utility and privacy can be given as  
    \begin{eqnarray*}
    &&\mbox{max.} \sum_{i} \epsilon_{i} \\
    &&\mbox{subject to } \sum_{i}I_{i}(\epsilon_{i}) \leq T, \epsilon_{i}\geq 0
    \end{eqnarray*}
    where $T$ is some threshold value predefined by the query engine. The above optimization problem states that the sum of the $\epsilon_{i}$ values should be maximized subject to the constraint that the total information leakage is less than or equal to some threshold value. It should be noted that the optimization problem is not necessarily a convex optimization problem \cite{boyd2004convex}. This is because $f_{1}^{(ij)}(\epsilon_{i})$ is not necessarily a convex nor concave function for all values of $\epsilon_{i}$. However, as long as the dependency information do not rapidly change over time, the optimization problem can be solved offline once and the values obtained can be used for a prolonged period of time until the dependency information has significantly changed. Finding a convex relaxation that does provide provable optimality bound will be part of future work. 
\section{Conclusions}
\label{sec:conclusion}
In this paper, we studied the privacy leakage in ODSP when the records are statistically dependent. We showed that the robustness against the exclusion attack can be quantified in a closed-form in the case of a single query as well as the case of multiple independent queries. 

We also considered the case when the applications are actively colluding with each other by exchanging query responses and quantified the privacy leakage. We showed that multiplicative composition rule can be derived from two pieces of information to quantify the overall privacy leakage.

Finally, we set up an optimization framework that trades-off between the utility and the privacy leakage of data.

While this paper studied the overall information leakage in terms of sensitivity of attributes denoted as $X_{i}$, extending the approach to describe the complete conditional distribution space of a dependent attribute is an interesting future work. In addition, we will investigate how the privacy leakage change when only partial dependency information is given to the adversary.

\bibliographystyle{IEEEtran}
\bibliography{osdp}

\begin{thebibliography}{10}
\providecommand{\url}[1]{#1}
\csname url@samestyle\endcsname
\providecommand{\newblock}{\relax}
\providecommand{\bibinfo}[2]{#2}
\providecommand{\BIBentrySTDinterwordspacing}{\spaceskip=0pt\relax}
\providecommand{\BIBentryALTinterwordstretchfactor}{4}
\providecommand{\BIBentryALTinterwordspacing}{\spaceskip=\fontdimen2\font plus
\BIBentryALTinterwordstretchfactor\fontdimen3\font minus
  \fontdimen4\font\relax}
\providecommand{\BIBforeignlanguage}[2]{{%
\expandafter\ifx\csname l@#1\endcsname\relax
\typeout{** WARNING: IEEEtran.bst: No hyphenation pattern has been}%
\typeout{** loaded for the language `#1'. Using the pattern for}%
\typeout{** the default language instead.}%
\else
\language=\csname l@#1\endcsname
\fi
#2}}
\providecommand{\BIBdecl}{\relax}
\BIBdecl

\bibitem{gubbi2013internet}
J.~Gubbi, R.~Buyya, S.~Marusic, and M.~Palaniswami, ``Internet of things (iot):
  A vision, architectural elements, and future directions,'' \emph{Future
  generation computer systems}, vol.~29, no.~7, pp. 1645--1660, 2013.

\bibitem{kozlov2012security}
D.~Kozlov, J.~Veijalainen, and Y.~Ali, ``Security and privacy threats in iot
  architectures,'' in \emph{Proceedings of the 7th International Conference on
  Body Area Networks}.\hskip 1em plus 0.5em minus 0.4em\relax ICST (Institute
  for Computer Sciences, Social-Informatics and~…, 2012, pp. 256--262.

\bibitem{mehrotra2016tippers}
S.~Mehrotra, A.~Kobsa, N.~Venkatasubramanian, and S.~R. Rajagopalan, ``Tippers:
  A privacy cognizant iot environment,'' \emph{IEEE International Conference on
  Pervasive Computing and Communication Workshops (PerCom Workshops)}, pp.
  1--6, 2016.

\bibitem{agarwal2010occupancy}
Y.~Agarwal, B.~Balaji, R.~Gupta, J.~Lyles, M.~Wei, and T.~Weng,
  ``Occupancy-driven energy management for smart building automation,'' in
  \emph{Proceedings of the 2nd ACM workshop on embedded sensing systems for
  energy-efficiency in building}.\hskip 1em plus 0.5em minus 0.4em\relax ACM,
  2010, pp. 1--6.

\bibitem{ni2010privacy}
Q.~Ni, E.~Bertino, J.~Lobo, C.~Brodie, C.-M. Karat, J.~Karat, and A.~Trombeta,
  ``Privacy-aware role-based access control,'' \emph{ACM Transactions on
  Information and System Security (TISSEC)}, vol.~13, no.~3, p.~24, 2010.

\bibitem{dwork2011differential}
C.~Dwork, ``Differential privacy,'' \emph{Encyclopedia of Cryptography and
  Security}, pp. 338--340, 2011.

\bibitem{ghayyur2018iot}
S.~Ghayyur, Y.~Chen, R.~Yus, A.~Machanavajjhala, M.~Hay, G.~Miklau, and
  S.~Mehrotra, ``Iot-detective: Analyzing iot data under differential
  privacy,'' in \emph{Proceedings of the 2018 International Conference on
  Management of Data}.\hskip 1em plus 0.5em minus 0.4em\relax ACM, 2018, pp.
  1725--1728.

\bibitem{doudalis2017one}
S.~Doudalis, I.~Kotsogiannis, S.~Haney, A.~Machanavajjhala, and S.~Mehrotra,
  ``One-sided differential privacy,'' \emph{arXiv preprint arXiv:1712.05888},
  2017.

\bibitem{ukil2014iot}
A.~Ukil, S.~Bandyopadhyay, and A.~Pal, ``Iot-privacy: To be private or not to
  be private,'' \emph{IEEE Conference on Computer Communications Workshops
  (INFOCOM WKSHPS)}, pp. 123--124, 2014.

\bibitem{medaglia2010overview}
C.~M. Medaglia and A.~Serbanati, ``An overview of privacy and security issues
  in the internet of things,'' in \emph{The internet of things}.\hskip 1em plus
  0.5em minus 0.4em\relax Springer, 2010, pp. 389--395.

\bibitem{byun2005purpose}
J.-W. Byun, E.~Bertino, and N.~Li, ``Purpose based access control of complex
  data for privacy protection,'' in \emph{Proceedings of the tenth ACM
  symposium on Access control models and technologies}.\hskip 1em plus 0.5em
  minus 0.4em\relax ACM, 2005, pp. 102--110.

\bibitem{chen2017pegasus}
Y.~Chen, A.~Machanavajjhala, M.~Hay, and G.~Miklau, ``{PeGaSus}: Data-adaptive
  differentially private stream processing,'' in \emph{Proceedings of the 2017
  ACM SIGSAC Conference on Computer and Communications Security}.\hskip 1em
  plus 0.5em minus 0.4em\relax ACM, 2017, pp. 1375--1388.

\bibitem{rizvi2004extending}
S.~Rizvi, A.~Mendelzon, S.~Sudarshan, and P.~Roy, ``Extending query rewriting
  techniques for fine-grained access control,'' in \emph{Proceedings of the
  2004 ACM SIGMOD international conference on Management of data}.\hskip 1em
  plus 0.5em minus 0.4em\relax ACM, 2004, pp. 551--562.

\bibitem{thomas2009building}
R.~J. Thomas, N.~A. Anderson, S.~G. Donaldson, and M.~A. Behar, ``Building
  management system,'' Jul.~28 2009, uS Patent 7,567,844.

\bibitem{mahasenan2018building}
A.~V. Mahasenan, H.~Khurana, A.~Kulkarni, and J.~S. Puri, ``Building management
  using location information,'' Dec.~13 2018, uS Patent App. 15/843,958.

\bibitem{cover2012elements}
T.~M. Cover and J.~A. Thomas, \emph{Elements of information theory}.\hskip 1em
  plus 0.5em minus 0.4em\relax John Wiley \& Sons, 2012.

\bibitem{boyd2004convex}
S.~Boyd and L.~Vandenberghe, \emph{Convex optimization}.\hskip 1em plus 0.5em
  minus 0.4em\relax Cambridge university press, 2004.

\end{thebibliography}

\end{document}